\documentclass[12pt]{article}

\usepackage{amsmath}
\usepackage{amssymb}
\usepackage{amsfonts}
\usepackage{latexsym}
\usepackage{braket}
\usepackage{color}

\catcode `\@=11 \@addtoreset{equation}{section}

\catcode `\@=12

\newcommand{\be}{\begin{equation}}
	\newcommand{\en}{\end{equation}}
\newcommand{\bea}{\begin{eqnarray}}
	\newcommand{\ena}{\end{eqnarray}}
\newcommand{\beano}{\begin{eqnarray*}}
	\newcommand{\enano}{\end{eqnarray*}}
\newcommand{\bee}{\begin{enumerate}}
	\newcommand{\ene}{\end{enumerate}}

\newcommand{\Sc}{{\cal S}}

\newcommand{\F}{{\cal F}}
\newcommand{\G}{{\cal G}}

\newcommand{\1}{1 \!\! 1}

\newtheorem{thm}{Theorem}

\newtheorem{lemma}[thm]{Lemma}
\newtheorem{prop}[thm]{Proposition}

\newenvironment{proof}{\noindent {\bf Proof --}}{\hfill$\square$ \vspace{3mm}\endtrivlist}

\catcode `\@=11 \@addtoreset{equation}{section}
\catcode `\@=12

\begin{document}
	
	\thispagestyle{empty}

	\vspace*{2cm}
	
	\begin{center}
		{\Large \bf On the Berry-Keating Operator\footnote{This paper is dedicated to the memory of Franciszek Hugon Szafraniec. We like to imagine he would have enjoyed reading it.}}   
		
		\vspace{5mm}
		
		{\large Fabio Bagarello}\\
		 Dipartimento di Ingegneria,
		Universit\`a di Palermo,\\ I-90128  Palermo, Italy\\
		and I.N.F.N., Sezione di Catania\\
		e-mail: fabio.bagarello@unipa.it\\

		\vspace{8mm}

		{\large Sergiusz Ku\.{z}el}\\
		AGH University, Krak\'ow, Poland\\
		e-mail: kuzhel@agh.edu.pl\\

	\end{center}
	
	\vspace*{1cm}
	
	\begin{abstract}
	\noindent {We review here two different viewpoints on the Berry–Keating operator $H_{\mathrm{BK}}$, whose connection to the Riemann hypothesis remains an intriguing and not yet fully understood question, despite considerable attention in the recent literature.} 
    In particular, we propose two somehow complementary views to $H_{BK}$: the first is based on a purely Hilbertian point of view, on dilation operators and on the Mellin transform. The second is a distributional approach, with a specific view to ladder operators, generalized eigenstates of $H_{BK}$, and generalized coherent states.
	\end{abstract}
	
	\vspace{0.5cm}
    
{\bf Keywords}: Berry-Keating operator, Mellin transform,  pseudo bosons, bi-coherent states 

\vspace{0.5cm}

{\bf Mathematics Subject Classification}: 47B93, 47B25, 79, 81V73

	
	\newpage
	
	\section{Introduction}
The Berry-Keating operator represents a profound and ambitious attempt to connect the zeros of the Riemann zeta function—a central object in number
theory—to the spectral theory of a quantum Hamiltonian. First proposed by Michael Berry and Jonathan Keating \cite{BK}, this operator quantizes the classical system $H=xp$, producing an operator whose spectrum has been conjectured to encode the non-trivial zeros of the Riemann zeta function. This proposal may be viewed as a modern realization of the Hilbert–Pólya conjecture—associated with David Hilbert and George Pólya—which interprets the Riemann hypothesis as the assertion that these zeros arise as eigenvalues of a suitable self-adjont operator. Specifically, the imaginary parts of the zeros located on the critical line $\text{Re}(s) = \frac{1}{2}$ are identified with the eigenvalues of the self-adjoint operator.  This conjecture is supported by numerical and heuristic considerations, in particular by the observed statistical agreement between the distribution of non-trivial zeros of the zeta function and the distribution of eigenvalues of random Hermitian matrices, as predicted by random matrix theory \cite{FM, Mon, OZ}.

 After proper symmetrization, the operator considered in \cite{BK} was 
$$
H_{BK}=\frac{1}{2}\left(\hat x\hat p+\hat p\hat x\right),
$$ where $\hat x$ and $\hat p=-i\frac{d}{dx}$ are the usual self-adjoint position and momentum operators, satisfying (in the sense of unbounded operators) the commutation relation $[\hat x,\hat p]=i\1$, in units in which $\hbar=1$. 
In \cite{BK} the authors find the generalized eigenfunctions of $H_{BK}$ as follows:
\begin{equation}\label{new1}
\psi_E(x)=x^{-1/2+iE}, \qquad E\in\mathbb{R}
\end{equation}
which satisfy the eigenvalue equation 
$$
H_{BK}\psi_E(x)=E\psi_E(x)
$$
It is worth stressing that $\psi_E(x)$ is not square integrable, for real (or real positive) $E$.
This issue highlights a central difficulty in the Berry–Keating conjecture: the construction of a suitable Hilbert space on which 
$H_{BK}$ is self-adjoint with a discrete spectrum. The absence of a natural inner-product structure makes the analysis particularly challenging. Many authors have attempted to analyze and clarify these issues, yet a complete understanding of the underlying mechanisms has not been achieved \cite{bellisard, bbm, S, EY}. 

The purpose of this paper is to draw attention to the still little-known properties of the Berry–Keating operator. The paper consists of two independent parts. In the first part, the Berry–Keating operator $H_{BK}$ is analyzed in the $L_2$-setting, where the modified Mellin transform and the concept of the Lebesgue spectrum play a central role. Here we highlight the relationship between 
$H_{BK}$ and the Phillips symmetric operator \cite{KN}, and discuss the possibility of developing a Lax-Phillips scattering theory for the Berry-Keating squared operator $H_{BK}^2$ \cite{LF}.

{The results of the first part, in which $H_{BK}$ was studied through its generalized eigenfunctions \eqref{new1}, may indicate that the Hilbert–Pólya conjecture cannot be realized within the $L_2$- framework based on the Berry–Keating operator. A possible route forward, inspired by the classical theory of generalized eigenfunction expansions \cite{BER1, BER2}, is to replace the original family of generalized eigenfunctions  of $H_{BK}$ with a more suitable family.
Motivated by this idea, we show in the second part that the Berry–Keating operator admits a natural description in terms of weak pseudo-bosons (WPBs) \cite{bagweak}.}
More precisely, we demonstrate that $H_{BK}$ can be naturally framed within the WPB formalism and is not substantially different from the number-like operator.
A key aspect of this analysis is the role played by non-square-integrable functions. In particular, we construct a set of generalized eigenstates of $H_{BK}$ in the sense of distributions. Finally, we emphasize that, in contrast to the approach proposed in \cite{bbm}, no metric operator is required in our framework. We also focus on coherent states of a very specific kind, the so called {\em weak bi-coherent states}, \cite{bagspringerbook}, and we discuss some of their properties.

\section{The Berry-Keating operator in $L_2$-setting}

Consider a strongly continuous one-parameter unitary group of dilation
\begin{equation}\label{uman1}
U(t)f=e^{-t/2}f(e^{-t}x), \qquad t\in\mathbb{R}
\end{equation}
acting in the Hilbert space $L_2(\mathbb{R})$
\cite[p. 259]{TE}.
The infinitesimal generator $G$ of $U(\cdot)$ can be easily computed  for $ f\in\mathcal{S}(\mathbb{R})$:
$$
Gf=-i\left(x\frac{d}{dx}+\frac{1}{2}\right)f=\frac{1}{2}\left(\hat x\hat p+\hat p\hat x\right)f=H_{BK}f. 
$$
This means that the Berry-Keating operator $H_{BK}$ can be interpreted as the infinitesimal generator of $U(\cdot)$, i.e.,  
$$
U(t)=e^{-iH_{BK}t}=e^{-\frac{t}{2}}e^{-x\frac{d}{dx}t}.
$$  
In summary, the operator $H_{BK}$ with the domain
of the infinitesimal generator
\begin{equation}\label{uman2}
 \mathcal{D}(H_{BK})=\{f\in{L_2(\mathbb{R})} \ : \ \lim_{t\to{0}}\frac{1}{t}(U(t)-I)f \ \mbox{exists in the $L_2$-norm topology}  \}
 \end{equation} 
 is self-adjoint in $L_2(\mathbb{R})$ and its restriction to  $\mathcal{S}(\mathbb{R})\subset\mathcal{D}(H_{BK})$ is essentially self-adjoint.

It follows from \cite{WA, AP} that an equivalent description of \eqref{uman2} is given as 
\begin{equation}\label{uman4}
\mathcal{D}(H_{BK})=\{f\in{L_2(\mathbb{R})} \ : \ f\in AC_{loc}(\mathbb{R}), \quad
xf'(x)\in{L_2(\mathbb{R})} \}.
\end{equation}

In view of \eqref{uman1}, $$
\mathcal{P}U(t)=U(t)\mathcal{P}, \qquad 
\mathcal{X}U(t)=U(t)\mathcal{X}, \qquad t\in\mathbb{R},
$$
where $\mathcal{P}f(x)=f(-x)$ is the space parity operator and $\mathcal{X}f=({\sf sign}\ x)f(x)$. This fact and \eqref{uman2}  mean that
\begin{equation}\label{uman3b}
H_{BK}\mathcal{P}f=\mathcal{P}H_{BK}f, \qquad H_{BK}\mathcal{X}f=\mathcal{X}H_{BK}f, \qquad f\in\mathcal{D}(H_{BK}).
\end{equation}

In view of \eqref{uman3b}, the subspaces
\begin{equation}\label{uman9}
L_2^{\pm}(\mathbb{R})=(I\pm\mathcal{P})L_2(\mathbb{R}), \qquad L_2(\mathbb{R}_\pm)=(I\pm\mathcal{X})L_2(\mathbb{R})
\end{equation} 
reduce the operator $H_{BK}$ and the restrictions of $H_{BK}$ in these subspaces are infinitesimal generators of the corresponding restrictions of the unitary group $U(\cdot)$.  

A self-adjoint operator $H$
 is said to have \emph{the Lebesgue spectrum of multiplicity $n$}
 if its spectrum is absolutely continuous with respect to the Lebesgue measure on $\mathbb{R}$
 and the spectral multiplicity is equal to $n$ at every point of $\mathbb{R}$.
 Equivalently, 
$H$ has the Lebesgue spectrum of multiplicity $n$
 if and only if it is unitarily equivalent to the operator of multiplication by the independent variable on 
 $L_2(\mathbb{R}, \mathbb{C}^n)$.

\begin{prop}\label{prop1}
The Berry-Keating operator $H_{BK}$ has the Lebesgue spectrum of  multiplicity $2$.
\end{prop}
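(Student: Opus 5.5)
We will use the decomposition $L_2(\mathbb{R})=L_2(\mathbb{R}_+)\oplus L_2(\mathbb{R}_-)$ from \eqref{uman9}. Both summands reduce $H_{BK}$, and on each of them $H_{BK}$ generates the restricted dilation group $U(\cdot)$. The plan is to show that each restriction is unitarily equivalent to the operator of multiplication by the independent variable on $L_2(\mathbb{R})$. Then $H_{BK}$ is unitarily equivalent to multiplication by $\lambda$ on $L_2(\mathbb{R},\mathbb{C}^2)$, which is exactly the Lebesgue spectrum of multiplicity $2$. Since $\mathcal{P}$ is unitary, maps $L_2(\mathbb{R}_+)$ onto $L_2(\mathbb{R}_-)$ and commutes with $U(t)$, it suffices to treat $L_2(\mathbb{R}_+)$.

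On $L_2(\mathbb{R}_+)$ we apply the logarithmic change of variables $(Wf)(s)=e^{s/2}f(e^{s})$, $s\in\mathbb{R}$. This is unitary from $L_2(\mathbb{R}_+)$ onto $L_2(\mathbb{R})$, because $\int_{\mathbb{R}}e^{s}|f(e^s)|^2ds=\int_0^\infty|f(x)|^2dx$. A direct computation gives
\[
(WU(t)f)(s)=e^{s/2}e^{-t/2}f(e^{s-t})=(Wf)(s-t),
\]
so $WU(t)W^{-1}$ is the translation group $g\mapsto g(\cdot-t)$ on $L_2(\mathbb{R})$. Its generator is the momentum operator $-i\,d/ds$, with domain $H^1(\mathbb{R})$. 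Composing with the Fourier transform $\mathcal{F}$ turns translation by $t$ into multiplication by $e^{-i\lambda t}$. Hence $\mathcal{F}W U(t) W^{-1}\mathcal{F}^{-1}$ is multiplication by $e^{-i\lambda t}$, and by Stone's theorem $\mathcal{F}WH_{BK}|_{L_2(\mathbb{R}_+)}W^{-1}\mathcal{F}^{-1}$ is multiplication by $\lambda$ on its maximal domain. The composite $\mathcal{F}W$ is essentially the modified Mellin transform $f\mapsto (2\pi)^{-1/2}\int_0^\infty x^{-1/2-i\lambda}f(x)\,dx$, which matches the generalized eigenfunctions \eqref{new1}.

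Working at the level of the unitary groups, rather than with the differential expression, avoids any direct domain verification. Two self-adjoint operators are unitarily equivalent via $V$ precisely when their unitary groups are intertwined by $V$, and the domain \eqref{uman2} is defined through the group. The only real care needed is the bookkeeping of sign conventions in $e^{-iH_{BK}t}$ versus the translation direction, so that the equivalence is to multiplication by $\lambda$ and not by $-\lambda$. Even a sign slip would be harmless, since $\lambda\mapsto-\lambda$ preserves Lebesgue measure and multiplicity. Finally, on $L_2(\mathbb{R}_-)$ we use the unitary $W\mathcal{P}$, which gives the second copy. Combining the two copies yields $L_2(\mathbb{R},\mathbb{C}^2)$, completing the argument.
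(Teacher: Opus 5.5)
Your argument is correct and complete, but it is organized differently from the paper's proof.

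\emph{Decomposition.} You split $L_2(\mathbb{R})$ into the half-line pieces $L_2(\mathbb{R}_\pm)$, transport one to the other with $\mathcal{P}$, and diagonalize via $\mathcal{F}W$. The paper instead splits into even and odd functions $L_2^\pm(\mathbb{R})$ and uses the parity-adapted kernels $\psi_E^\pm$ of \eqref{new2}. It asserts that $\mathcal{M}^\pm$ are unitary and quotes the intertwining relation \eqref{uman7} from the Mellin-transform handbook it cites.

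\emph{What your route buys.} It is self-contained: unitarity is the one-line substitution $x=e^s$. Because you intertwine the unitary groups rather than the differential expressions, Stone's theorem gives equality of the operators with domains included, matching the definition \eqref{uman2}.

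\emph{Sign.} You also get the sign right. Your kernel $x^{-1/2-i\lambda}=\overline{\psi_\lambda(x)}$ produces multiplication by $+\lambda$. The unconjugated kernel $|x|^{-1/2+iE}$ in \eqref{new2} in fact intertwines $H_{BK}$ with multiplication by $-E$, as an integration by parts shows. As you anticipated, this is immaterial for the Lebesgue-spectrum conclusion.

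\emph{Relation between the two.} The constructions are really the same diagonalization. After the reflection $E\mapsto -E$, the paper's $\mathcal{M}$ equals your map $\mathcal{F}W\oplus\mathcal{F}W\mathcal{P}$ followed by the constant unitary $\frac{1}{\sqrt2}\left[\begin{array}{cc}1&1\\1&-1\end{array}\right]$ acting on $\mathbb{C}^2$.

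\emph{What the paper's route buys.} Its parity-adapted frame is the representation used afterwards, in which $\mathcal{P}$ and $\mathcal{X}$ become $\sigma_3$ and $\sigma_1$, and which feeds into Lemma \ref{uman634}. In your frame these roles are exchanged: $\mathcal{X}$ becomes $\sigma_3$ and $\mathcal{P}$ becomes $\sigma_1$.
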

\begin{proof}
The decomposition of the eigenfunctions 
$\psi_E$  (see \eqref{new1}) into even and odd components produces two eigenfunctions of 
$H_{BK}$:
$$
\psi_E^+(x)=|x|^{-1/2+iE}, \qquad \psi_E^-(x)=({\sf sign}\ x)|x|^{-1/2+iE}.  
$$
This allows us to define two modifications of the Mellin transform: 
\begin{equation}\label{new2}
\begin{array}{c}
({\mathcal{M}}^+f)(E)=\frac{1}{2\sqrt{\pi}}\int_{-\infty}^\infty\psi_E^+(x)f(x)dx=\frac{1}{2\sqrt{\pi}}\int_{-\infty}^\infty{f(x)}{|x|}^{-1/2+iE}dx \vspace{5mm} \\
({\mathcal{M}}^-f)(E)=\frac{1}{2\sqrt{\pi}}\int_{-\infty}^\infty\psi_E^-(x)f(x)dx=\frac{1}{2\sqrt{\pi}}\int_{-\infty}^\infty{f(x)}({\sf sign}\ x){|x|}^{-1/2+iE}dx
\end{array}
\end{equation}

The  operators ${\mathcal M}^+ : L_2^{+}(\mathbb{R}) \to L_2(\mathbb{R})$ and ${\mathcal M}^- : L_2^{-}(\mathbb{R}) \to L_2(\mathbb{R})$  are unitary mappings. Moreover, in view of \cite[Section 11.3]{BBO}, 
\begin{equation}\label{uman7}
\mathcal{M}^+(H^+_{BK}f_+)(E)=E({\mathcal{M}^+}f_+)(E), \qquad \mathcal{M}^-(H^-_{BK}f_-)(E)=E({\mathcal{M}^-}f_-)(E),  \qquad E\in\mathbb{R}, 
\end{equation}
where $f_\pm\in{L_2^\pm(\mathbb{R})}$ and $H^{\pm}_{BK}$ are the restrictions of $H_{BK}$ to $L_2^{\pm}(\mathbb{R})$. 

Consider the decomposition 
$L_2(\mathbb{R})=L_2^+(\mathbb{R})\oplus{L}_2^-(\mathbb{R})$ and the operator $\mathcal{M}=\mathcal{M}^+\oplus\mathcal{M}^-$ defined in $L_2(\mathbb{R})$. By construction, $\mathcal{M}$ is a unitary mapping from $L_2(\mathbb{R})$ to $L_2(\mathbb{R}, \mathbb{C}^2)$, where $\mathcal{M}^+$ maps $L_2(\mathbb{R})$ to $L_2(\mathbb{R}, 
\mathbb{C}\oplus{0})$ and $\mathcal{M}^-$ maps $L_2(\mathbb{R})$ to $L_2(\mathbb{R}, 
0\oplus\mathbb{C})$.
 
  Since $H_{BK}$ has the decomposition $H_{BK}=H^-_{BK}\oplus{H^+_{BK}}$ and \eqref{uman7} satisfied, the operator $\mathcal{M}$ transfers $H_{BK}$ to the multiplication operator by independent variable $E$ in $L_2(\mathbb{R}, \mathbb{C}^2)$. The proof is complete.
\end{proof}

It follows from \eqref{new2} and the definition of $\mathcal{M}$ that the operators $\mathcal{P}$ and $\mathcal{X}$ are unitarily equivalent  to the   multiplication by the 
Pauli matrices $\sigma_3=\left[\begin{array}{cc} 1 & 0 \\
0 & -1 \end{array}\right]$ and $\sigma_1=\left[\begin{array}{cc} 0 & 1 \\
1 & 0 \end{array}\right]$ in $L_2(\mathbb{R}, \mathbb{C}^2)$, respectively.  Precisely, 
$$
{\mathcal{M}}\mathcal{P}f=\sigma_3(\mathcal{M}f)(E), \qquad {\mathcal{M}}\mathcal{X}{f}=\sigma_1(\mathcal{M}f)(E). 
$$
Since $L_2(\mathbb{R}_+)=(I+\mathcal{X})L_2(\mathbb{R})$, the operator $\mathcal{M}$ transforms $L_2(\mathbb{R}_+)$ into the subspace 
$$
(\sigma_0+\sigma_1)L_2(\mathbb{R}, \mathbb{C}^2)=\left\{\left[\begin{array}{c}
v(E)\\
v(E) \end{array}\right] : v(E)\in{L_2(\mathbb{R})}\right\} 
$$
of  $L_2(\mathbb{R}, \mathbb{C}^2)$. 
This means that the restriction of the Berry–Keating operator to 
$L_2(\mathbb{R}_+)$ has the Lebesgue spectrum of multiplicity one and $H_{BK}^+$ is unitarily equivalent to the momentum operator $\hat{p}$ 
on  $L_2(\mathbb{R})$. In \cite{TW}, the operator $H_{BK}^+$ was proposed as a natural candidate for a new momentum operator — referred to as hyperbolic momentum — associated with dilations of 
$\hat{x}$ in the construction of quantum mechanics in $\mathbb{R}_+$. 

Similarly, since $L_2^+(\mathbb{R})=(I+\mathcal{P})L_2(\mathbb{R})$, the operator $\mathcal{M}$ maps the subspace of even functions $L_2^+(\mathbb{R})$
 onto the subspace 
$$
(\sigma_0+\sigma_3)L_2(\mathbb{R}, \mathbb{C}^2)=\left\{\left[\begin{array}{c}
v(E)\\
0 \end{array}\right] : v(E)\in{L_2(\mathbb{R})}\right\} 
$$
of  $L_2(\mathbb{R}, \mathbb{C}^2)$. The restriction of $H_{BK}$
 to $L_2^+(\mathbb{R})$ has the Lebesgue spectrum of multiplicity one.

By reasoning analogous to that used above, one finds that the restriction of $H_{BK}$
 to each of the subspaces \eqref{uman9} possesses the Lebesgue spectrum of multiplicity one. Hence, these operators cannot qualify as candidates for the hypothetical Hilbert–Pólya operator. The remaining task is therefore to find a Hilbert space for which $H_{BK}$
admits a discrete spectrum. One natural strategy is to invoke extension-theoretic methods. For example, \cite{ES} considers the Berry–Keating operator on compact quantum graphs. Although the resulting self-adjoint realizations possess a discrete spectrum, their eigenvalues cannot reproduce the nontrivial zeros of the Riemann zeta function (see No-go theorem 15.6 in \cite{ES}).

Independently of the search for the hypothetical Hilbert–Pólya operator, studying the Berry–Keating operator within the framework of extension theory leads to a number of interesting results.
 As an example, let us consider the restriction of 
$H_{BK}$ to the domain
$$
\mathcal{D}'=\{f\in\mathcal{D}(H_{BK}) : \langle (H_{BK}-iI)f,  \gamma_+ \rangle=\langle (H_{BK}-iI)f,  \gamma_- \rangle=0\}, 
$$
where 
$$
 \gamma_+(x)=\left\{\begin{array}{cc} 
0 & |x|\leq{1} \\
|x|^{-3/2} & |x|>1
\end{array}\right., \qquad \gamma_-(x)=\left\{\begin{array}{cc} 
0 & |x|\leq{1} \\
({\sf sign}\ x)|x|^{-3/2} & |x|>1
\end{array}\right.
$$
\begin{lemma}\label{uman634}
The restriction of 
$H_{BK}$ to
$\mathcal{D}'$ determines a closed densely defined symmetric operator 
$S=H_{BK}|_{\mathcal{D}'}$ with defect indices 
$(2,2)$. The defect subspaces of $S$ corresponding to $\pm{i}$ coincide with $
\ker(S^*+iI)=\mbox{span}\{\gamma_+, \gamma_-\}$ and $  \ker(S^*-iI)=\mbox{span}\{\eta_+, \eta_-\}, 
$
where
$$
 \eta_+(x)=\left\{\begin{array}{cc} 
|x|^{1/2} & |x|\leq{1} \\
0 & |x|>1
\end{array}\right., \qquad \eta_-(x)=\left\{\begin{array}{cc} 
({\sf sign}\ x)|x|^{1/2} & |x|\leq{1} \\
0 & |x|>1
\end{array}\right.
$$
\end{lemma}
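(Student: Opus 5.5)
The plan is to use the standard construction of a symmetric restriction of a self-adjoint operator $A$. Take a finite-dimensional subspace $\mathcal{N}\subset L_2(\mathbb{R})$ with $\mathcal{N}\cap\mathcal{D}(A)=\{0\}$. Define
$$S=A|_{\{f\in\mathcal{D}(A)\,:\,(A-iI)f\perp\mathcal{N}\}}.$$
Then $S$ is closed and symmetric. Its defect subspace is $\ker(S^*+iI)=\mathcal{N}$, because $\mathcal{R}(S-iI)=(A-iI)\mathcal{D}(S)=\mathcal{N}^\perp$; this uses that $A-iI$ maps $\mathcal{D}(A)$ bijectively onto $L_2$. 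Likewise $\ker(S^*-iI)=(A+iI)(A-iI)^{-1}\mathcal{N}$, which is the Cayley transform of $\mathcal{N}$. Density of $\mathcal{D}(S)$ is equivalent to $\mathcal{N}\cap\mathcal{D}(A)=\{0\}$. Indeed, if $g\perp\mathcal{D}(S)$, then $g\perp(A-iI)^{-1}\mathcal{N}^\perp$, so $(A+iI)^{-1}g\in\mathcal{N}$, i.e.\ some $n\in\mathcal{N}$ lies in $\mathcal{D}(A)$. I will apply this with $A=H_{BK}$ and $\mathcal{N}=\mathrm{span}\{\gamma_+,\gamma_-\}$.

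\textbf{Step 1 (membership).} First check $\gamma_\pm\in L_2(\mathbb{R})$: $\int_1^\infty x^{-3}dx<\infty$, so they are square integrable. Next, no nonzero combination $a\gamma_++b\gamma_-$ lies in $\mathcal{D}(H_{BK})$. By \eqref{uman4}, elements of $\mathcal{D}(H_{BK})$ are locally absolutely continuous. On each half-line, $a\gamma_++b\gamma_-$ is a multiple of $|x|^{-3/2}\chi_{|x|>1}$, which jumps at $x=\pm1$ unless its coefficient vanishes. Hence $\mathcal{N}\cap\mathcal{D}(H_{BK})=\{0\}$. This gives density of $\mathcal{D}'$, closedness, symmetry, defect indices $(2,2)$, and $\ker(S^*+iI)=\mathrm{span}\{\gamma_+,\gamma_-\}$. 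Linear independence of $\gamma_\pm$ is clear.

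\textbf{Step 2 (the other defect space).} Compute $(H_{BK}+iI)(H_{BK}-iI)^{-1}\gamma_\pm=\gamma_\pm+2i(H_{BK}-iI)^{-1}\gamma_\pm$. Equivalently, find $u\in\mathcal{D}(H_{BK})$ with $(H_{BK}-iI)u=\gamma_\pm$. Then $(H_{BK}+iI)u=\gamma_\pm+2iu$, and this should be a multiple of $\eta_\pm$.

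Since $H_{BK}$ commutes with $\mathcal{P}$ and $\mathcal{X}$ by \eqref{uman3b}, it suffices to work on $x>0$ and extend by parity. There the equation reads $-i(xu'+u/2)-iu=\gamma$, i.e.\ $xu'+\tfrac32u=i\gamma$.
\begin{itemize}
\item For $x>1$, $\gamma=x^{-3/2}$, and the solutions are $u=x^{-3/2}(i\log x+c)$.
\item For $x<1$, $\gamma=0$, so $u=d\,x^{-3/2}$. Square integrability near $0$ forces $d=0$.
\end{itemize}
Continuity at $x=1$ then forces $c=0$. So $u=i\,x^{-3/2}\log x\,\chi_{x>1}$, which lies in $L_2$ with $xu'\in L_2$, hence in $\mathcal{D}(H_{BK})$.

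This yields $\gamma+2iu=x^{-3/2}(1-2\log x)\chi_{x>1}$, which is not $\eta_\pm$. So the Cayley image is not literally $\mathrm{span}\{\eta_\pm\}$. The main obstacle is therefore reconciling this with the statement. I would recheck the claim directly: $\ker(S^*-iI)=\mathcal{R}(S+iI)^\perp$. An element $f\in\mathcal{D}(H_{BK})$ is in $\mathcal{D}(S)$ iff $\langle f,(H_{BK}+iI)\gamma\rangle=0$ in the distributional sense. Computing $(H_{BK}+iI)\gamma_+$ distributionally gives a combination of $\delta$'s at $\pm1$ together with the smooth part $-i(x\gamma'+2\gamma)=-\tfrac{i}{2}x^{-3/2}$ on $|x|>1$.

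From there, I would identify $\eta_\pm$ by solving $H_{BK}^*\eta=i\eta$ on the complement of $\{\pm1\}$. This equation $xu'+u/2=-u$ has solution $|x|^{-3/2}$ on $|x|>1$. The stated $|x|^{1/2}$ on $|x|<1$ instead solves the equation with eigenvalue $-i$. So I would verify the sign conventions, e.g.\ $\langle\cdot,\cdot\rangle$ linear in the second slot, or $H$ versus $-H$. I expect that under the paper's convention the roles of $\gamma_\pm$ and $\eta_\pm$ match, with $\eta_\pm$ supported on $|x|\le1$ and killed by the constraint. The final check is $\langle (S+iI)f,\eta_\pm\rangle=0$ for $f\in\mathcal{D}'$, done by integrating by parts on $(-1,1)$ and using the constraint to cancel the boundary terms at $\pm1$.
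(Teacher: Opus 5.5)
\textbf{The proposal does not prove the statement, because its second half is false as written.} Your own Step~2 shows this, and the closing ``final check'' cannot be carried out.

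Your Step~1 is correct. It is a self-contained proof of what the paper gets by citing \cite[Lemma 5.1]{KN}: $S$ is closed, densely defined and symmetric with defect indices $(2,2)$, and $\ker(S^*+iI)=\mathrm{span}\{\gamma_+,\gamma_-\}$. Your Step~2 computation is also correct, and it shows that $\ker(S^*-iI)=\mathrm{span}\{\eta_+,\eta_-\}$ fails.

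Concretely, take any $f\in C_c^\infty((0,1))$. It lies in $\mathcal{D}'$, because $(H_{BK}-iI)f$ and $\gamma_\pm$ have disjoint supports, so the constraint is vacuous. There are no boundary terms, and integration by parts (inner product antilinear in the first slot, as in the paper) gives $\langle (S+iI)f,\eta_\pm\rangle=-2i\int_0^1x^{1/2}\,\overline{f(x)}\,dx$. This is nonzero for $f\ge0$, $f\not\equiv0$.

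More generally, $C_c^\infty((-1,1))\subset\mathcal{D}'$. Hence any element $u$ of $\ker(S^*-iI)$ satisfies $xu'=-\tfrac32u$ on $(-1,0)\cup(0,1)$, and since $u\in L_2$ it vanishes there. The true defect space is the one you found: it is spanned by $|x|^{-3/2}(1-2\log|x|)\chi_{\{|x|>1\}}$ and its odd counterpart.

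No inner-product convention changes this. The orthogonality conditions and the identity $(S+iI)^*=S^*-iI$ do not depend on which argument is linear. What repairs the statement is a sign:
\begin{itemize}
\item If $\mathcal{D}'$ is defined by $\langle (H_{BK}+iI)f,\gamma_\pm\rangle=0$, an integration by parts on $|x|>1$ shows $\mathcal{D}'=\{f\in\mathcal{D}(H_{BK}) : f(\pm1)=0\}$.
\item For this operator, $\ker(S^*-iI)=\mathrm{span}\{\gamma_\pm\}$ and $\ker(S^*+iI)=\mathrm{span}\{\eta_\pm\}$.
\item Equivalently, the lemma holds verbatim for $-H_{BK}$.
\end{itemize}
You should have drawn this conclusion from your own correct computation, rather than expecting the roles to match under some convention.

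For comparison, the paper reaches the stated result through the Mellin picture. Via \eqref{uman7} it takes $F\mathcal{M}$ to carry $H_{BK}$ to $i\,d/d\lambda$, reduces the constraint to $u_\pm(0)=0$, and then checks $\langle (S+iI)f,\eta_\pm\rangle=0$.

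However, with the unconjugated kernel $|x|^{-1/2+iE}$ of \eqref{new2}, integration by parts gives $\mathcal{M}^\pm(H_{BK}f)=-E\,\mathcal{M}^\pm f$. Explicitly, $(F\mathcal{M}^+f)(\lambda)=\sqrt2\,e^{\lambda/2}f(e^\lambda)$ for even $f$, so $H_{BK}$ becomes $-i\,d/d\lambda$. This agrees with the local identities you noticed: $H_{BK}\gamma_\pm=i\gamma_\pm$ on $|x|>1$ and $H_{BK}\eta_\pm=-i\eta_\pm$ on $0<|x|<1$.

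With the correct sign, the constraint reads $u_\pm(0)=2\int_0^\infty u_\pm(\lambda)e^{-\lambda}\,d\lambda$; for even $f$ this is $f(1)=2\int_1^\infty f(x)x^{-3/2}\,dx$. It is not $u_\pm(0)=0$, so the paper's last verification fails for the same reason yours would. Your direct ODE computation is the reliable one, and the sign slip lies in the paper's argument.
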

\begin{proof} In view of \eqref{uman4}, the two-dimensional subspace $\mbox{span}\{\gamma_+, \gamma_-\}$  contains no nonzero vectors from $\mathcal{D}(H_{BK})$. By virtue of \cite[Lemma 5.1]{KN}, the operator $
S=H_{BK}, \ \mathcal{D}(S)={\mathcal{D}'}$ is a closed, densely defined symmetric operator with defect indices $(2,2)$. Moreover, $\ker(S^*+iI)=\mbox{span}\{\gamma_+, \gamma_-\}$

The functions $\gamma_+, \eta_+$ and $\gamma_-, \eta_-$ are even and odd, respectively, by construction. Applying \eqref{new2}, we obtain
$$
(\mathcal{M}\gamma_{\pm})(E)=(\mathcal{M}^\pm\gamma_{\pm})(E)=\left(\frac{1}{\sqrt{\pi}}\int_1^\infty{x}^{-2+iE}dx\right)e_\pm=\frac{i}{\sqrt{\pi}}\frac{e_{\pm}}{E+i},
$$
$$
(\mathcal{M}\eta_{\pm})(E)=(\mathcal{M}^\pm\eta_{\pm})(E)=\left(\frac{1}{\sqrt{\pi}}\int_0^1{x}^{iE}dx\right)e_\pm=-\frac{i}{\sqrt{\pi}}\frac{e_{\pm}}{E-i},
$$
where $e_+=\left[\begin{array}{c}
1 \\
0 \end{array}\right]$ and $e_-=\left[\begin{array}{c}
0 \\
1 \end{array}\right]$.
Taking the Fourier transform,
$$
(Ff)(\lambda)=\frac{1}{\sqrt{2\pi}}\int_{-\infty}^\infty{e^{-i\lambda{E}}}f(E)dE,
$$ 
yields
$$
(F\mathcal{M}\gamma_{\pm})(\lambda)=\sqrt{2}e_\pm\left\{\begin{array}{cc}
e^{-\lambda} & \lambda>0 \\
0 & \lambda<0\end{array}\right., \qquad (F\mathcal{M}\eta_{\pm})(\lambda)=\sqrt{2}e_\pm\left\{\begin{array}{cc}
0 & \lambda>0 \\
e^{\lambda} & \lambda<0\end{array}\right..
$$

The unitary operator 
$F{\mathcal M}$ 
maps $L_2(\mathbb{R})$ to $L_2(\mathbb{R}, \mathbb{C}^2)$ 
 and transforms the Berry–Keating operator $H_{BK}$ into the momentum operator
$i\frac{d}{d\lambda}$, $\mathcal{D}(i\frac{d}{d\lambda})=W_2^1(\mathbb{R}, \mathbb{C}^2)$
 acting in $L_2(\mathbb{R}, \mathbb{C}^2)$. Moreover, for each $f\in\mathcal{D}(S)$, 
 $$
0=\langle (H_{BK}-iI)f, \gamma_\pm\rangle=\langle F\mathcal{M}(H_{BK}-iI)f, F\mathcal{M}\gamma_\pm\rangle=\langle (i\frac{d}{d\lambda}-iI)u(\lambda), F\mathcal{M}\gamma_\pm\rangle= 
$$
$$
=i\sqrt{2}\int_{0}^\infty(u_{\pm}'(\lambda)-u_{\pm}(\lambda))e^{-\lambda}d\lambda=-i\sqrt{2}u_\pm(0), 
 $$
 where $u(\lambda)=\left[\begin{array}{c}
 u_+(\lambda) \\
 u_-(\lambda)
 \end{array}\right]=(F\mathcal{M}f)(\lambda).$
This means that $S$ is unitary equivalent to the symmetric operator 
\begin{equation}\label{uuu14}
i\frac{d}{d\lambda}, \qquad \mathcal{D}(i\frac{d}{d\lambda})=\left\{u\in{W_2^1}(\mathbb{R}, \mathbb{C}^2) : u(0)=\left[\begin{array}{c}
 0 \\
 0
 \end{array}\right]\right\}.
 \end{equation}
 Taking \eqref{uuu14} into account, we obtain, for $f\in\mathcal{D}(S)$, 
 $$
\langle (S+iI)f, \eta_\pm\rangle=\langle (H_{BK}+iI)f, \eta_\pm\rangle=\langle F\mathcal{M}(H_{BK}+iI)f, F\mathcal{M}\eta_\pm\rangle=\langle (i\frac{d}{d\lambda}+iI)u(\lambda), F\mathcal{M}\eta_\pm\rangle= 
$$
$$
=i\sqrt{2}\int_{-\infty}^0(u_{\pm}'(\lambda)+u_{\pm}(\lambda))e^{\lambda}d\lambda=i\sqrt{2}u_\pm(0)=0 
 $$
 since $u$ belongs to $\mathcal{D}(i\frac{d}{d\lambda})$. Therefore, $\ker(S^*-iI)=\mbox{span}\{\eta_+, \eta_-\}$. The proof is completed. 
\end{proof}

\begin{thm}\label{uman54}
 The spectrum $\sigma(H)$ of  a proper extension $H$ of $S$  coincides with one of the following sets: (i) $\sigma(H)=\mathbb{C}$; (ii)  $\sigma(H)=\mathbb{C}_+\cup\mathbb{R}$ or $\sigma(H)=\mathbb{C}_-\cup\mathbb{R}$; (iii)
$\sigma(H)=\mathbb{R}$. In the latter case, $H$ is similar\footnote{operators $A$  and $B$ are  \emph{similar} if $B=T^{-1}AT$, where $T$ is a bounded operator with bounded inverse } to the Berry–Keating operator 
 $H_{BK}$ in $L_2(\mathbb{R})$. If $H$ is a self-adjoint extension, then $H$ is unitarily equivalent to $H_{BK}$.
\end{thm}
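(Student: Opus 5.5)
By Lemma \ref{uman634}, the unitary operator $F\mathcal{M}$ transforms $S$ into the symmetric operator \eqref{uuu14}. Every statement of Theorem \ref{uman54} is invariant under unitary equivalence. So I will work entirely with the model operator $S_0=i\frac{d}{d\lambda}$ on $\{u\in W_2^1(\mathbb{R},\mathbb{C}^2): u(0)=0\}$. On $L_2(\mathbb{R},\mathbb{C}^2)=L_2(\mathbb{R}_-,\mathbb{C}^2)\oplus L_2(\mathbb{R}_+,\mathbb{C}^2)$ this is the direct sum of the two maximal symmetric half-line operators. Its adjoint is $S_0^*=i\frac{d}{d\lambda}$ on $W_2^1(\mathbb{R}\setminus\{0\},\mathbb{C}^2)$, with independent one-sided boundary values $u(0^-),u(0^+)\in\mathbb{C}^2$.

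\textbf{Step 1: parametrize the extensions.} A proper extension $H$ satisfies $S_0\subset H\subset S_0^*$. It is therefore determined by a subspace $\Theta$ of the boundary space $\mathbb{C}^2\times\mathbb{C}^2\ni(u(0^-),u(0^+))$, with $\Theta$ neither $\{0\}$ nor everything. I expect only one nondegenerate class to have real spectrum: $\Theta$ is the graph of an invertible matrix, $u(0^+)=Ku(0^-)$ with $K\in GL_2(\mathbb{C})$. The self-adjoint extensions are exactly those with $K$ unitary, since the boundary form $\langle S_0^*u,v\rangle-\langle u,S_0^*v\rangle$ equals $i\bigl(\langle u(0^+),v(0^+)\rangle-\langle u(0^-),v(0^-)\rangle\bigr)$ up to sign.

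\textbf{Step 2: compute the spectrum by solving the resolvent equation.} For $z\in\mathbb{C}\setminus\mathbb{R}$, solutions of $(S_0^*-z)u=0$ are $u=e^{-iz\lambda}c_\pm$ on $\mathbb{R}_\pm$. For $\operatorname{Im}z>0$ only the piece on $\mathbb{R}_-$ is square integrable, and for $\operatorname{Im}z<0$ only the piece on $\mathbb{R}_+$. The resolvent of $H$ at $z$ can be written as the resolvent of the decoupled self-adjoint operator (e.g.\ $K=I$, which is the full-line momentum and unitarily equivalent to $H_{BK}$) plus a Krein-type correction. That correction exists and is bounded iff a certain $2\times2$ linear map between boundary values is bijective. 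This map depends only on $\Theta$ and on the sign of $\operatorname{Im}z$, not on $z$ itself, because the deficiency solutions have $z$-independent boundary values. Consequently, on each half-plane the extension is either entirely resolvent-regular or entirely spectral. This gives the trichotomy: both half-planes spectral (case (i), $\sigma=\mathbb{C}$), exactly one spectral (case (ii), the half-plane union $\mathbb{R}$, since the spectrum is closed), or neither (case (iii)). The real line always lies in the spectrum. A Weyl sequence of functions supported away from $0$ in $\mathcal{D}(S_0)$ approximates $e^{-iE\lambda}$ and is contained in every extension. I expect the main obstacle to be the bookkeeping of this step. One must check that the criterion of solvability in $\mathbb{C}_+$ (injectivity on the $\mathbb{R}_-$ deficiency direction) and in $\mathbb{C}_-$ are exactly the conditions that $\Theta$ projects isomorphically onto the $u(0^-)$ coordinate, respectively the $u(0^+)$ coordinate. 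Both hold iff $\Theta$ is the graph of an invertible $K$.

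\textbf{Step 3: similarity in case (iii).} In case (iii) we have $u(0^+)=Ku(0^-)$ with $K$ invertible. Define the bounded, boundedly invertible operator $T$ on $L_2(\mathbb{R},\mathbb{C}^2)$ by $(Tu)(\lambda)=u(\lambda)$ for $\lambda<0$ and $(Tu)(\lambda)=K^{-1}u(\lambda)$ for $\lambda>0$. Then $T$ maps $\mathcal{D}(H)$ onto $W_2^1(\mathbb{R},\mathbb{C}^2)$, and since $K$ is constant it commutes with differentiation. Hence $H=T^{-1}\bigl(i\frac{d}{d\lambda}\bigr)T$, and transporting back with $F\mathcal{M}$ shows that $H$ is similar to $H_{BK}$. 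If $H$ is self-adjoint, $K$ is unitary, so $T$ is unitary and $H$ is unitarily equivalent to $H_{BK}$.
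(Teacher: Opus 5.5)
Your proof is correct, but it takes a different route from the paper's.

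\textbf{The paper's route.} It parametrizes proper extensions by subspaces $M\subseteq\ker(S^*+iI)\dot{+}\ker(S^*-iI)$ via von Neumann's formulas, and disposes of $\dim M\neq 2$ directly. For $\dim M=2$ it observes that $S$, being unitarily equivalent to \eqref{uuu14}, is a Phillips symmetric operator, i.e.\ it has a constant characteristic function. The trichotomy, the similarity to $H_{BK}$ and the unitary equivalence in the self-adjoint case are then imported wholesale from Proposition 4.1 and Theorem 4.2 of \cite{KN}.

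\textbf{Your route.} You stay inside the concrete model. You parametrize extensions by subspaces $\Theta$ of one-sided boundary values at $0$. You solve $(H-z)u=f$ as $(i\frac{d}{d\lambda}-z)^{-1}f$ plus a deficiency solution. You then note that solvability depends only on the sign of $\operatorname{Im}z$, because the boundary values of the deficiency solutions are $z$-independent. That $z$-independence is exactly the concrete content of ``constant characteristic function''.

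\textbf{What each buys.} Given Lemma \ref{uman634}, your argument is self-contained. It identifies exactly which extensions land in each case: case (iii) is $u(0^+)=Ku(0^-)$ with $K$ invertible, and self-adjoint iff $K$ is unitary. The case $\dim\Theta\neq 2$ automatically falls into case (i), matching the paper's $\dim M\neq2$. You also get the similarity explicitly as the piecewise-constant multiplier $T$. The paper's argument is shorter and applies verbatim to any simple Phillips symmetric operator, but all the actual work is delegated to \cite{KN}.

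\textbf{One slip: your half-planes are swapped.} For $\operatorname{Im}z>0$ the deficiency solution is $e^{-iz\lambda}c\,\chi_{\mathbb{R}_-}$, with boundary values $(c,0)$.
\begin{itemize}
\item $H-z$ is injective iff $\Theta\cap(\mathbb{C}^2\times\{0\})=\{0\}$.
\item With $u_0=(i\frac{d}{d\lambda}-z)^{-1}f$, whose common value $u_0(0^\pm)=a$ ranges over all of $\mathbb{C}^2$, $H-z$ is surjective iff every $a$ admits some $c$ with $(a+c,a)\in\Theta$.
\end{itemize}
Together these say that $\Theta$ projects isomorphically onto the $u(0^+)$ coordinate, not the $u(0^-)$ coordinate. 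For example, $\Theta=\mathbb{C}^2\times\{0\}$ projects isomorphically onto $u(0^-)$, yet every $z\in\mathbb{C}_+$ is an eigenvalue. The theorem does not say which half-plane occurs in case (ii), and the condition ``both projections bijective'' is symmetric. So this slip does not affect your conclusion.
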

\begin{proof} Denote $\mathfrak{M}=\ker(S^*+iI)\dot{+}\ker(S^*-iI)$. 
According to the von Neumann formulas, any proper extension $H$ of $S$ (that is, $S \subseteq H \subseteq S^*$) is uniquely determined by the choice of a subspace $M\subseteq\mathfrak{M}$. Precisely,
$$
\mathcal{D}(H)=\mathcal{D}(S)\dot{+}M \quad \mbox{and} \quad H={S^*}|_{\mathcal{D}(H)}. 
$$

If $\dim{M}\not=2$, then the spectrum of $H$ coincides with $\mathbb{C}$. 
If $\dim M=2$ , then the operator $H$ can be described using the boundary triplet technique and its spectrum can be characterized with the use of an analytic function (the Weyl function or the characteristic function of $S$) \cite{BHS}. In the simplest case, the characteristic function of $S$ is constant. This condition characterizes a special class of symmetric operators with unusual properties, namely the Phillips symmetric operators \cite{KN, KVS}. One of the simplest examples of a simple Phillips symmetric operator is the momentum operator subject to the zero boundary condition \eqref{uuu14}. The operator $S$ also belongs to this class, since, as follows from the proof of Lemma \ref{uman634}, it is unitarily equivalent to \eqref{uuu14}. Consequently, the characteristic function of $S$ is constant as well. This observation was used in \cite{KN} to investigate proper extensions of a simple Phillips symmetric operator. In particular, the statement of the theorem follows from Proposition 4.1 and Theorem 4.2  in \cite{KN}.
\end{proof}

In view of \eqref{uman1} and the definition of Fourier transform $F$, we obtain
$$
FU(t)=U^{-1}(t)F
$$

Since 
 $H_{BK}$ is the infinitesimal generator of unitary group of dilations  $U(t)$, 
  the infinitesimal generator of $U^{-1}(t)$ coincides with $-H_{BK}$.
This means that 
$$
FH_{BK}=-H_{BK}F.
$$
Therefore the 
``squared" Berry-Keating operator 
$$
H_{BK}^2=\left(-i\left(x\frac{d}{dx}+\frac{1}{2}\right)\right)^2=-x^2\frac{d^2}{dx^2}-2x\frac{d}{dx}-\frac{1}{4}
$$
commutes with $F$. Furthermore, $H_{BK}^2$  
is a self-adjoint and positive operator in $L_2(\mathbb{R})$.

It follows from the proof of Lemma \ref{uman634}  that $H_{BK}^2$ is unitary equivalent to the self-adjoint operator $-\frac{d^2}{d\lambda^2}$ defined on $W_2^2(\mathbb{R}, \mathbb{C}^2)$ in $L_2(\mathbb{R}, \mathbb{C}^2)$. This means that the group of unitary operators $e^{-iH_{BK}^2t}$ determines a free evolution in the Lax-Phillips scattering approach \cite{KM}. One should also note that a similar result holds for every restriction of $H_{BK}^2$
 to the subspaces 
\eqref{uman9} of $L_2(\mathbb{R})$. This simple observation opens the way to applying the well-developed framework of the Lax–Phillips approach to the analysis of various perturbations of $H_{BK}^2$.

 The operator $H_{BK}^2$ is connected with the Black–Scholes operator \cite{AP, ES}, whose mathematical properties have been widely investigated, also for its applications in finance, \cite{baaquie, bag2016, roy, roy2}. This makes this connection even more interesting and surely deserves more attention in the future.

\section{ Weak pseudo bosons for the Berry-Keating operator}

In this section we will consider a completely different point of view on $H_{BK}$, showing that this operator can be analyzed in terms of weak pseudo-bosons (WPBs), \cite{bagspringerbook}, and that in fact $H_{BK}$ is not particularly different from the number-like operator. 

Let us consider the operators 
 $$
 a=\frac{d}{dx}=i\hat{p}, \qquad  b=\hat {x}f(x)=xf(x).
 $$  
Since 
$[a,b]f=f$  holds for all $f\in\Sc(\mathbb{R})$,
it is natural to interpret $a$ and $b$
as $\Sc(\mathbb{R})$-pseudo bosons \cite{bagspringerbook}. However, see \cite{bagweak}, if we look for the vacua of $a$ and $b^\dagger=\hat x$, we easily find that $\varphi_0(x)=1$ and $\psi_0(x)=\delta(x)$, with a suitable choice of normalizations. Therefore, it is clear that neither $\varphi_0(x)$ nor $\psi_0(x)$ belongs to  $L_2(\mathbb{R})$. 

Using the ladder properties of $a$ and $b$, we have
\be
\varphi_n(x)=\frac{b^n}{\sqrt{n!}}\,\varphi_0(x)=\frac{x^n}{\sqrt{n!}}, \qquad \psi_n(x)=\frac{(a^\dagger)^n}{\sqrt{n!}}\,\psi_0(x)=\frac{(-1)^n}{\sqrt{n!}}\,\delta^{(n)}(x),
\label{31}\en
for all $n=0,1,2,3,\ldots$. This suggests considering $a^\dagger$ and $b$ as linear operators acting on $\Sc'(\mathbb{R})$. This is possible (and natural) since $a, b, a^\dagger$ and $b^\dagger$ all map $\Sc'(\mathbb{R})$ to itself. For this reason, we can consider the following version of the pseudo-bosonic commutation rule:
$$
[a,b] \varphi (x) = \varphi (x), \qquad \varphi(x)\in\Sc'(\mathbb{R}).
$$
From (\ref{31}) it is clear that $b$ and $a^\dagger$ act as raising operators, respectively, on the sets $\F_{\varphi}=\{\varphi_n\}_{n=0}^\infty$ and $\F_{\psi}=\{\psi_n\}_{n=0}^\infty$:
$$
b\varphi_n=\sqrt{n+1}\varphi_{n+1}, \qquad \qquad a^\dagger\psi_n=\sqrt{n+1}\psi_{n+1}, \qquad n\in\mathbb{N}\cup\{0\}.
$$
 Similarly $b^\dagger$ and $a$ act as lowering operators on these sets:
$$
a\varphi_n=\sqrt{n}\varphi_{n-1}, \qquad \qquad b^\dagger\psi_n=\sqrt{n}\psi_{n-1}, \qquad n\in\mathbb{N}\cup\{0\}
$$
with the understanding that $a\varphi_0(x)=b^\dagger\psi_0(x)=0$.

If we now set $N=ba$ and $N^\dagger=a^\dagger b^\dagger$, we see \cite{bagweak} that
$$
N\varphi_n= n\varphi_n, \qquad N^\dagger \psi_n=n\psi_n, \qquad n\in\mathbb{N}\cup\{0\}
$$
 We should probably stress that both $N$ and $N^\dagger$ are  {\it extensions} of the analogous operators originally defined on a dense subset of $L_2(\mathbb{R})$ to a much larger set, $\Sc'(\mathbb{R})$.  It is possible to see that the various distributions $\varphi_n$ and $\psi_n$ are (generalized) eigenfunctions of $H_{BK}$ and $H_{BK}^\dagger$, with purely imaginary eigenvalues. Indeed, we can rewrite
$$
H_{BK}=-i\left(ba+\frac{1}{2}I\right)=-i\left(N+\frac{1}{2}I\right), \qquad H_{BK}^\dagger=i\left(N^\dagger+\frac{1}{2}I\right).
$$
Hence, we have 
$$
H_{BK}\varphi_n=E_n\varphi_n, \qquad H_{BK}^\dagger\psi_n=\overline E_n\psi_n, 
$$
where $E_n=-i(n+\frac{1}{2})$.

Let us now consider the set of all these generalized eigenvectors, $\F_\varphi=\{\varphi_n\}_{n=0}^\infty$ and $\F_\psi=\{\psi_n\}_{n=0}^\infty$. In \cite{bagweak} the possibility of using  $\F_\varphi$ and $\F_{\psi}$ to expand the appropriate functions has been discussed. In particular, it has been shown that
$$
\sum_{n=0}^\infty \left<\psi_n,f\right>\varphi_n(x)=\sum_{n=0}^\infty \frac{(-1)^n}{n!} \left<\delta^{(n)},f\right>x^n=\sum_{n=0}^\infty \frac{1}{n!} f^{(n)}(0)\,x^n=f(x),
$$
for all $f(x)$ admitting a Taylor expansion. However, if we invert the role of $\F_{\psi}$ and $\F_{\varphi}$, the result is more complicated:
$$
\sum_{n=0}^\infty \left<\varphi_n,f\right>\psi_n(x)=\sum_{n=0}^\infty\frac{(-1)^n}{n!}\left<x^n,f\right>\delta^{(n)}(x),
$$
which is what is sometimes called a {\em dual Taylor series}. It is known that the series does not define in general an element of $\mathcal{D}'(\mathbb{R})$, a distribution (hence it cannot define a tempered distribution) except when the series above returns a finite sum, which may happen only for specific $f$. In this case, it is clear that this sum is indeed a tempered distribution.  

It has been shown in \cite{bagweak} that the sets $\F_\varphi$ and $\F_\psi$ are biorthonormal in a suitable sense. In fact, since neither $\varphi_n$ nor $\psi_n$ belong to $L_2(\mathbb{R})$, we cannot use the usual scalar product on it to check biorthonormality. But, see again \cite{bagweak}, it was shown that it is possible to extend the standard scalar product to our case. Let us summarize the procedure. First, we observe that the scalar product between two suitably well-behaved functions—e.g., $f, g \in \Sc(\mathbb{R})$—can be expressed in terms of a convolution
between $\overline{f(x)}$ and the function $\tilde{g}(x)=g(-x)$: $$\left<f,g\right>=(\overline{f}* \tilde{g})(0).$$ 
Then, we define the scalar product between two elements $F(x), G(x)\in\Sc'(\mathbb{R})$ as the following convolution:
$$
\left<F,G\right>=(\overline{F}* \tilde{G})(0),
$$
whenever this convolution exists.  This is exactly what happens for us, and we get \cite{bagweak},
$$
\left<\varphi_n,\psi_m\right>=\delta_{n,m}.
$$

In \cite{baggarg} it has been proven that $\F_{\varphi}$ and $\F_{\psi}$ are $\Sc_{\cal A}(\mathbb{R})$-quasi bases, where 
$$
	\Sc_{\cal A}(\mathbb{R})=\Sc(\mathbb{R})\cap {\cal A}(\mathbb{R})
$$
	and where $A(\mathbb{R})$ is the set of  entire real analytic functions, which admit expansion in Taylor series, convergent everywhere in $\mathbb{C}$. Of course $\Sc_{\cal A}(\mathbb{R})$ contains many functions of $\Sc(\mathbb{R})$, but not all. We conclude that
$$
\left<f,g\right>=\sum_{n=0}^\infty\,\left<f,\psi_n\right>\left<\varphi_n,g\right>=\sum_{n=0}^\infty\,\left<f,\varphi_n\right>\left<\psi_n,g\right>,
$$
for all $f(x),g(x)\in\Sc_{\cal A}(\mathbb{R})$.

We refer to \cite{bagweak,bagspringerbook} for more properties of the sets $\F_\varphi$ and $\F_{\psi}$. 

It is now possible to introduce the so-called {\em weak bi-coherent states} in connection with the operators $a$ and $b$. This aspect is somehow related to  Professor Szafraniec's scientific interests, who was very much involved with coherent states in many different ways. We refer to \cite{franek2,franek1} for some of his papers on this specific topic. The main difference between what we are briefly going to discuss here, and Franciszek Szafraniec's contributions, is in the role of distributions in the Berry-Keating Hamiltonian.

\subsection{Bi-coherent states}\label{sect3}

Bi-coherent states (BCS) are extensions of coherent states in a similar way as biorthonormal bases are extensions of orthonormal bases. In particular, the resolution of the identity involves both sets of BCS. But, see \cite{bagspringerbook}, this resolution of the identity is recovered in $L_2(\mathbb{R})$, or in some dense subspace of it. Here, in view of the nature of the vectors in (\ref{31}), $L_2(\mathbb{R})$ does not appear to be the natural space to work with. In fact, we expect that a more relevant aspect should be played by  functional spaces used in distributions theory. This is indeed what we will briefly review here, focusing on some results first deduced in \cite{baggarg}.

We start introducing the set
$$
\G=\left\{f(x)\in\Sc(\mathbb{R}): \, e^{kx}f(x)\in\Sc(\mathbb{R}), \, \forall k\in\mathbb{C}\right\}.
$$
This set is dense in $L_2(\mathbb{R})$, since it contains $D(\mathbb{R})$, the set of compactly supported $C^\infty$ functions. In \cite{baggarg} it has been proven that, $\forall f(x)\in\G$, the series $\sum_{k=0}^\infty\,\frac{z^k}{\sqrt{k!}}\,\langle f,\varphi_k\rangle$ converges to $\int_{\mathbb{R}}\overline{f(x)} e^{zx}\,dx$ for all $z\in\mathbb{C}$. 

This suggests to define a functional $F_\varphi$ on $\G$ as follows:
\be
F_\varphi[f](z,\overline z)=e^{-\frac{|z|^2}{2}}\sum_{k=0}^\infty\,\frac{\overline z^k}{\sqrt{k!}}\,\langle \varphi_k,f\rangle=e^{-\frac{|z|^2}{2}}\int_{\mathbb{R}} e^{\overline z x}\,f(x)\,dx,
\label{42}\en
together with the function $\varphi(z;x)=e^{-\frac{|z|^2}{2}}e^{zx}$, so that we can also write
$$
F_\varphi[f](z,\overline z)=\int_{\mathbb{R}} \overline{\varphi(z;x)}\,f(x)\,dx=\langle \varphi(z),f\rangle.
$$

\vspace{2mm}

The same analysis, but with some difference, can be repeated for the series $\sum_{k=0}^\infty\,\frac{z^k}{\sqrt{k!}}\,\langle f,\psi_k\rangle$. We start recalling that, see \cite{baggarg}, $\langle g,\psi_n\rangle=
\frac{1}{\sqrt{n!}}\overline{g^{(n)}(0)}$ for all $g(x)\in\Sc(\mathbb{R})$. Then
$$
\sum_{k=0}^\infty\,\frac{z^k}{\sqrt{k!}}\,\langle f,\psi_k\rangle=\sum_{k=0}^\infty\,\frac{z^k}{k!}\,\overline{f^{(k)}(0)}=\overline{\sum_{k=0}^\infty\,\frac{\overline z^k}{k!}\,f^{(k)}(0)}=\overline{f(\overline z)}
$$
for all functions in $\Sc_{\cal A}(\mathbb{R})$.
 Here $f(z)=\sum_{k=0}^\infty\,\frac{z^k}{k!}\,{f^{(k)}(0)}$, which is clearly convergent $\forall z\in\mathbb{C}$, because of the definition of ${ A}(\mathbb{R})$. As in (\ref{42}), we define a linear functional $F_\psi$ on $\Sc_{\cal A}(\mathbb{R})$ and its related {\em representation} $\psi(z;x)$, as follows
$$
F_\psi[g](z,\overline z)=e^{-\frac{|z|^2}{2}}\sum_{k=0}^\infty\,\frac{\overline z^k}{\sqrt{k!}}\,\langle \psi_k,g\rangle=e^{-\frac{|z|^2}{2}}g(\overline z)=\int_{\mathbb{R}} \overline{\psi(z;x)}\,g(x)\,dx=\langle \psi(z),g\rangle,
$$
$\forall g(x)\in\Sc_{\cal A}(\mathbb{R})$. This could be formally rewritten as
$$
\psi(z;x)=e^{-\frac{|z|^2}{2}}\delta(x- z),
$$
where the Dirac delta distribution with complex argument appears, see e.g. \cite{complexdelta1}.

Using now the same steps as for ordinary CS we can check that $\varphi(z;x)$ and $\psi(z;x)$ satisfy the following resolution of the identity:
$$
\langle f,g\rangle=\int_{\mathbb{C}}\frac{d^2z}{\pi}\langle f,\varphi(z)\rangle\langle \psi(z),g\rangle=\int_{\mathbb{C}}\frac{d^2z}{\pi}\langle f,\psi(z)\rangle\langle \varphi(z),g\rangle,
$$
for all $f(x),g(x)\in\Sc_{\cal A}(\mathbb{R})\cap\G$. Indeed we have, for instance
$$
\int_{\mathbb{C}}\frac{d^2z}{\pi}\langle f,\varphi(z)\rangle\langle \psi(z),g\rangle=\int_{\mathbb{C}}\frac{d^2z}{\pi}e^{-|z|^2}\left(\sum_{k=0}^\infty\,\frac{z^k}{\sqrt{k!}}\,\langle f,\varphi_k\rangle\right)\left(\sum_{l=0}^\infty\,\frac{\overline z^l}{\sqrt{l!}}\,\langle \psi_l,g\rangle\right)=
$$
$$
=\frac{1}{\pi}\sum_{k,l=0}^\infty\frac{\langle f,\varphi_k\rangle\langle \psi_l,g\rangle}{\sqrt{k!\,l!}}\int_{\mathbb{C}}\,d^2z e^{-|z|^2} z^k\,\overline z^l=\langle f,g\rangle,
$$
since $\int_{\mathbb{C}}\,d^2z e^{-|z|^2} z^k\,\overline z^l=\pi\delta_{l,k}k!$. The conclusion follows from the fact that $(\F_\varphi,\F_{\psi})$ are $\Sc_{\cal A}(\mathbb{R})$-quasi bases, as discussed before. 

\vspace{2mm}

{\bf Remark:--} In \cite{baggarg} it has been shown how, playing a little bit with the above equalities, we can deduce an  integral representation of the Dirac delta distribution with complex argument. Indeed we have, using in particular the expressions $\varphi(z;x)=e^{-\frac{|z|^2}{2}}e^{zx}$ and $\psi(z;x)=e^{-\frac{|z|^2}{2}}\delta(x- z)$,
$$
\langle f,g\rangle=\int_{\mathbb{R}}\overline{f(x)}\,g(x)\,dx=\int_{\mathbb{C}}\frac{d^2z}{\pi}\langle f,\varphi(z)\rangle\langle \psi(z),g\rangle=
$$
$$
=\int_{\mathbb{C}}\frac{d^2z}{\pi}\left(e^{-\frac{|z|^2}{2}}\int_{\mathbb{R}}\overline{f(x)}\,e^{zx}\,dx\right)\left(e^{-\frac{|z|^2}{2}}g(\overline z)\right)=\int_{\mathbb{R}}\overline{f(x)}\left[\int_{\mathbb{C}}\frac{d^2z}{\pi}e^{-|z|^2}e^{zx}g(\overline z)\right]\,dx,
$$
 changing the order of the integration.

This equality should be satisfied for all  $f(x),g(x)\in\Sc_{\cal A}(\mathbb{R})\cap\G$, which is true if the following equality holds, at least weakly on $\Sc_{\cal A}(\mathbb{R})\cap\G$:
$$
\int_{\mathbb{C}}\frac{d^2z}{\pi}e^{-|z|^2}e^{zx}g(\overline z)=g(x),
$$
$\forall g(x)\in\Sc_{\cal A}(\mathbb{R})\cap\G$. This identity is exactly the integral representation of the Dirac delta distribution with complex argument we mentioned above. More results on this specific aspect can be found in \cite{baggarg}.

\vspace{2mm}

The vectors $\varphi(z;x)$ and $\psi(z;x)$ are also (weak) eigenstates of the lowering operators $a$ and $b^\dagger$, as expected. Indeed we can prove that $\forall f(x), g(x)\in\Sc_{\cal A}(\mathbb{R})\cap\G$, we have
$$
\langle f,a\varphi(z)\rangle=z\langle f,\varphi(z)\rangle, \qquad \langle g,b^\dagger\psi(z)\rangle=z\langle g,\psi(z)\rangle,
$$
which are our weak versions of the eigenvalue equations for CS. The proof of these equalities is given in \cite{baggarg}.

The conclusion is that, at the price of working weakly on suitable subsets of $L_2(\mathbb{R})$, for the operators $\hat x $ and $i\hat p$ it is possible to introduce two functionals $F_\varphi$ and $F_\psi$, or equivalently two $z$-dependent vectors $\varphi(z;x)$ and $\psi(z;x)$, which share with ordinary CS some of their essential properties. It is important to notice that, in what we have done so far, our weak bi-coherent states have been introduced by two convergent series. We refer to \cite{baggarg} for the role of two  displacement-like operators in connection with our states. This enrich the parallel between our weak BCS and the standard coherent states.

\section{Conclusions}

In this paper, we have reviewed some mathematical features connected with  the Berry-Keating Hamiltonian, both in a Hilbertian and in a distributional framework. The analysis of this operator is extremely reach and far from being concluded. A lot more can be done, and in particular its connection with the Hilbert–Pólya conjecture needs still to be better understood. Particularly amazing, we believe, is the apparent simplicity of $H_{BK}$, which would suggest a much simpler analysis of its mathematical features and physical applications, which is far from being what we are seeing.

	\section*{Acknowledgments}  
S.K. was partially supported by the statutory research tasks of the Faculty of Applied Mathematics at AGH University of Krakow. F. B.  acknowledges partial  support from Palermo University and from G.N.F.M. of the INdAM. {The authors are grateful to the anonymous referee for the careful reading of the manuscript and for helpful suggestions that improved its presentation.} 

\section*{Declarations}
\begin{itemize}
\item {\bf Funding} This research did not receive funding.
\item {\bf Conflict of interest} The authors declare that there is no conflict of interest regarding the publication of
this paper.
\item {\bf Data availability} No datasets were generated or analyzed during the current study.
\item {\bf Author contribution} The paper is equally contributed by both authors.
\end{itemize}


\begin{thebibliography}{99}
    \bibitem{WA} W. Arendt, \emph{Gaussian estimates and interpolation of the spectrum in $L_2$}, Differential and Integral Equations,   {\bf 7} (1994), 1153 - 1168.

\bibitem{AP} W. Arendt,  B. de Pagter, {\it Spectrum and asymptotics of the Black-Scholes partial differential equation in $(L^1, L^\infty)-$interpolation spaces}, Pacific J. Math. {\bf 202} (2002), 1-36. 

\bibitem{baaquie} B.E. Baaquie, \emph{Quantum Finance}, Cambridge University
Press, (2004)

\bibitem{bag2016} F. Bagarello, {\em Appearances of pseudo-bosons from Black-Scholes equation},  J. Math. Phys., {\bf 57}, 043504 (2016)


\bibitem{bagweak} F. Bagarello,  {\em
	Weak pseudo-bosons},   J. Phys. A, {\bf 53}, 135201 (2020)

\bibitem{bagspringerbook} F. Bagarello, {\em Pseudo-Bosons and Their Coherent States}, Springer, Mathematical Physics Studies, (2022)

\bibitem{baggarg} F. Bagarello, F. Gargano, {\em Bi-coherent states as generalized eigenstates of the position and the momentum operators},
ZAMP, {\bf 73} (2022), 119.

\bibitem{BHS} 
J. Behrndt, S. Hassi, H. de Snoo, \emph{Boundary Value Problems, Weyl Functions, and Differential Operators}, Birkh\"auser 2020.

\bibitem{bellisard} J. V. Bellisard, {\em Comment on "Hamiltonian for the zeros of the Riemann zeta function"}, arXiv: 1704.02644, quant-ph (2017)

\bibitem{bbm} C. M. Bender, D. C. Brodje, M. P. M\"uller, {\em Hamiltonian for the zeros of the Riemann zeta function}, Phys. Rev. Lett. {\bf 118} (2017), 130201.

\bibitem{BER1}
Ju. M. Berezanskii, {\em Expansions in 
Eigenfunctions of Selfadjoint Operators,} AMS, 1968. 

\bibitem{BER2}
Ju. M. Berezanskii, 
Z. G. Sheftel, G. F. Us, {\em Functional Analysis},    Birkhäuser, 1996.

\bibitem{BK} M. V. Berry, J. P. Keating, {\em $H=xp$ and the Riemann zeros}, in {\em Supersymmetry and Trace Formulae: Chaos and Disorder}, edited by I.V. Lerner et al., Kluwer Academic/Plenum: New York, 1999.

\bibitem{BBO} Bertrand, J., Bertrand, P., Ovarlez, J. \emph{The Mellin Transform.}
The Transforms and Applications Handbook: Second Edition.Ed. Alexander D. Poularikas
Boca Raton: CRC Press LLC, 2000.

\bibitem{complexdelta1}  R. A. Brewster, J. D. Franson, {\em Generalized delta functions and their use in quantum optics}, J. Math. Phys., {\bf 59}  (2018), 012102.

\bibitem{FM} P. J. Forrester,  A. Mays, {\it Finite-size corrections in random matrix theory and Odlyzko’s dataset for the Riemann zeros}, Proc. R. Soc. A {\bf 471} (2015) 20150436.

\bibitem{ES} S. Endres,   F. Steiner, {\em The Berry-Keating operator on $L_2(\mathbb{R}_+, dx)$ and on compact quantum graphs with general self-adjoint realizations} J. Phys. A: Math. Theor. {\bf 43} (2010), 095204.

\bibitem{franek2} J.-P. Gazeau, F. H. Szafraniec, {\em Holomorphic Hermite polynomials and a non-commutative plane}, J. Phys. A: Math. Theor. {\bf 44} (2011), 49. 

\bibitem{franek1} K. G\'orska, A. Horzela, F. H. Szafraniec, {\em Coherence, squeezing and entanglement -- an example of peaceful coexistence}, in  F. Bagarello,  J.-P. Antoine, J.-P. Gazeau Eds, {\em Coherent states	and applications: a contemporary panorama}, Springer Proceedings in Physics, (2018)

\bibitem{roy} T. K. Jana, P. Roy, {\em Supersymmetry in option pricing}, Phys. A, {\bf 390} (2011), 2350-2355.

\bibitem{roy2} T. K. Jana, P. Roy, {\em Pseudo Hermitian formulation of the quantum Black-Scholes Hamiltonian}, Phys. A, {\bf 391}  (2012), 2636-2640.

\bibitem{KM} S. Kuzhel, 
U. Moskalyova, {\it The Lax-Phillips scattering approach and singular perturbations of Schr\"{o}dinger operator homogeneous with respect to scaling transformations,} J. Math. Kyoto University, {\bf 45} (2005), 265-286.

\bibitem{KN} S. Kuzhel,  L. Nizhnik, \emph{Phillps symmetric operators and their extensions,} Banach J. Math. Anal. {\bf 12} (2018), 995-1016.

\bibitem{KVS} S. Kuzhel, O. Shapovalova,  L. Vavrykovych, \emph{On $J$-self-adjoint extensions of the Phillips symmetric operators}, Meth.  Funct. Anal. Topol.  {\bf 16} (2010),  333-348.

\bibitem{LF} P. D. Lax,  R. F. Phillips, \textit{Scattering Theory}, 2nd ed., with appendices by C. S. Morawetz
and G. Schmidt, Pure Appl. Math. 26, Academic Press, Boston, 1989.

\bibitem{Mon} Hugh L Montgomery, {\it The pair correlation of zeros of the zeta function}, Analytic number theory, Proc. Sympos. Pure Math., vol. XXIV, Providence, R.I.: American Mathematical Society (1973), pp. 181–193.  

\bibitem{OZ}  A.M. Odlyzko,  {\it On the distribution of spacings between zeros of the zeta function}, Mathematics of Computation, {\bf 48} (1987), 273–308.

\bibitem{S} G. Sierra,  {\em The Riemann Zeros as Spectrum and the
    Riemann Hypothesis} Symmetry, {\bf 11} (2019), 494.

 \bibitem{TE} G. Teschl, \emph{Mathematical Methods in Quantum Mechanics
With Applications to Schrodinger Operators,} AMS, 2014.

\bibitem{TW} J. Twamley, G. J. Milburn, \emph{The quantum Mellin transform}, New J. Phys. {\bf 8} (2006), 328.

\bibitem{EY}  E. Yakaboylu, {\em
Hamiltonian for the Hilbert–Pólya conjecture}, J. Phys. A: Math. Theor.
{\bf 57} (2024), 235204.





\end{thebibliography}
\end{document}